\title{The Garden Hose Complexity for the Equality Function}
\author{
        Well Y. Chiu  \\
        Department of Applied Mathematics\\
        National Chiao Tung University\\
        1001 University Road, Hsinchu 300, Taiwan\\
        well.am94g@nctu.edu.tw
            \and
        Mario Szegedy\\
        Department of Computer Science\\
        Rutgers, the State University of New Jersey\\
        110 Frelinghuysen Road, Piscataway, NJ 08854, USA\\
        szegedy@cs.rutgers.edu
            \and
        Chengu Wang \\
        Institute for Interdisciplinary Information Sciences \\
        Tsinghua University, Beijing 100084, China \\
        wangchengu@gmail.com
            \and
        Yixin Xu \\
        Department of Computer Science\\
        Rutgers, the State University of New Jersey\\
        110 Frelinghuysen Road, Piscataway, NJ 08854, USA\\
        yixinxu@cs.rutgers.edu
}
\date{\today}
\newtheorem{theorem}{Theorem}
\newtheorem{example}[theorem]{Example}
\newtheorem{lemma}[theorem]{Lemma}
\newtheorem{conjecture}[theorem]{Conjecture}
\newtheorem{corollary}[theorem]{Corollary}
\newcommand{\EQ}{\text{\textsc{EQ}}}
\newcommand{\GH}{\operatorname{GH}}
\newcommand{\waterin}{\operatorname{in}}
\newcommand{\waterout}{\operatorname{out}}
\begin{document}
\maketitle

\begin{abstract}
The garden hose complexity is a new communication complexity introduced by H. Buhrman, 
S. Fehr, C. Schaffner and F. Speelman \cite{BFSS13} to analyze position-based cryptography protocols 
in the quantum setting. We focus on the garden hose complexity of the equality function, 
and improve on the bounds of O. Margalit and A. Matsliah\cite{MM12} with the help of a new approach and of 
our handmade simulated annealing based solver. We have also found beautiful symmetries of the solutions 
that have lead us to develop the notion of \textit{garden hose permutation groups}. Then, exploiting 
this new concept, we get even further, although several interesting open problems remain. 
\end{abstract}

\smallskip
\noindent \textbf{Keywords.}
garden hose complexity, position-based cryptography, equality function, garden hose permutation groups

\section{Introduction}

\subsection{Quantum position based cryptography and the garden hose model}

\textit{Position based cryptography} was first introduced in \cite{CGMO09}, while its quantum setting was introduced in \cite{Position11} and also \cite{Position11Imp}. 
The basic idea is to use the geographical location as its only credential. For example, one message might be decrypted only if the receiver is in a specified location. 
In the setting of \textit{position verification}, a special application of position based cryptography, Alice wants to convince Bob that she is in a particular position. In \cite{CGMO09} it has been shown that 
position verification using classical protocols is impossible against colluding adversaries (who control all positions except the prover's claimed position). In the quantum
setting \cite{Position11}, a general impossibility has been shown: using an enormous amount of quantum entanglement, colluding adversaries are always able to make it look to 
the verifiers as if they were at the claimed position. \\






In \cite{BFSS13}, the authors proposed a protocol $PV_{qubit}^{f}$ for one dimensional quantum position verification, while the basic ideas generalize to higher dimensions. There are two verifiers $V_0, V_1$ and one prover $P$ 
in between them and $f$ is a fixed publicly known Boolean function $f:\{0,1\}^n \times \{0,1\}^n \rightarrow \{0,1\}$. Again, the above protocol 
$PV_{qubit}^{f}$ is insecure under two adversarial attackers Alice and Bob if they share enough number of EPR pairs. It has been shown that there is a one-one correspondence 
between attacking the position-verification scheme $PV_{qubit}^{f}$ and computing the function $f$ in the \textit{garden-hose model}\cite{BFSS13}. More generally, we can 
translate any strategy of Alice and Bob in the garden-hose model to a perfect quantum attack of $PV_{qubit}^{f}$ by using one EPR pair per pipe and performing Bell 
measurements where the players connect the pipes. We omit the details of the connection between them here while emphasizing on the new communication complexity model: the garden
hose model.\\

Alice and Bob want to compute $f(x,y)$ of a Boolean function $f:\{0,1\}^n \times \{0,1\}^n \rightarrow \{0,1\}$ together where Alice gets $x$ and Bob gets $y$. They have $m$ water
pipes numbered by $1,2,\cdots,m$ between them and in addition Alice has a water tap $0$. When Alice gets input $x$, she makes a \textit{configuration} $A(x)$: 
a (nontrivial) partial matching in $\{0,1,\cdots,m\}$, i.e. Alice uses hoses to connect those pairs of pipes according to the matching. When Bob gets input $y$, 
he makes a configuration $B(y)$: a (nontrivial) partial matching in $\{1,\cdots,m\}$. Then Alice opens the water tap, when water comes out from Alice's side, we say 
the output is $1$, otherwise it is $0$. Say that $f:\{0,1\}^n \times \{0,1\}^n \rightarrow \{0,1\}$ can be computed by $m$ pipes in the garden hose model if for 
every possible input $(x,y)$ there is a configuration pair $(A(x), B(y))$ such that water comes out in correct side. The \textit{garden hose complexity} of $f$, 
denoted by $\GH(f)$, is defined as the minimum number of pipes that computes $f$. \\

We will focus on computing $\GH(\EQ_n)$ for equality function $\EQ_n$ in this paper, 

\[ \EQ_n(x,y) = \left\{
  \begin{array}{l l}
    1 & \quad \text{if $x=y$ }\\
    0 & \quad \text{if $x \neq y$.}
  \end{array} \right.\]

\begin{example}
Let $n = 1$. Then three pipes suffice for Alice and Bob in computing $\EQ_1$, i.e. $\GH(\EQ_1) \leq 3$. Here is one solution:
$$A(0)=\{01\},\quad A(1)=\{02\},$$ $$B(0)=\{13\}, \quad B(1)=\{23\}.$$ The following figure describes the configuration pictorially. 
\end{example}

\begin{figure}[htb]
\begin{center}
\includegraphics[scale=0.7]{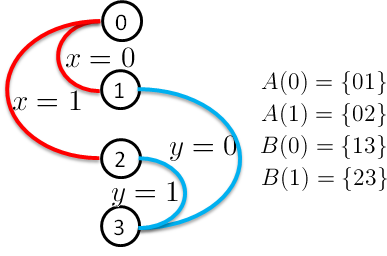}
\caption{Implementation of $\EQ_1$ function}
\end{center}
\end{figure}

\subsection{Prior results and our results}

It has been known that $\GH(\EQ_n)$ is lower bounded by $n+1$. Actually computing upper bound for $\GH(\EQ_n)$ featured as April 2012's ``Ponder This'' puzzle on the IBM 
website\footnote{\url{http://ibm.co/I7yvMz}}. The best solution there gives $$\GH(\EQ_n) \leq \frac{8}{\log 46} \cdot n + O(1) \approx 1.448 n + O(1)$$ by applying IBM SAT-Solver\cite{MM12}. With the help of a 
new approach and of our handmade simulated annealing based solver, we have improved their bounds. 
We have also found beautiful symmetries of the solutions 
that have lead us to develop the notion of \textit{garden hose permutation groups}. Then, exploiting 
this new concept, we push the upper bound to the following. 

\begin{theorem} \label{mainthm}
The garden hose complexity of the equality function: $$\GH(\EQ_n) \leq \frac{28}{\log 3^{13}}\cdot n + O(1) \approx 1.359 n + O(1). $$ 
\end{theorem}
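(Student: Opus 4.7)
The plan is to reduce Theorem~\ref{mainthm} to constructing a single small garden hose protocol for equality on a carefully chosen finite alphabet, and then amplifying it by composition to arbitrary bit length.

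First, I would prove a composition lemma of the form: \emph{if $\EQ$ over an alphabet $\Sigma$ is computable in the garden hose model using $m$ pipes, then $\GH(\EQ_n) \leq (m/\log|\Sigma|)\, n + O(1)$.} The construction partitions the $n$-bit input into $\lceil n/\log|\Sigma|\rceil$ symbols over $\Sigma$ and chains independent copies of the block protocol so that water traverses them sequentially, returning on Alice's side if and only if every block matches. Each splice of two consecutive blocks can identify one pair of ports and save a single pipe, making the total pipe count additive up to a constant. This is the same conceptual move that turns the Margalit--Matsliah $8$-pipe, $46$-letter protocol into the bound $8n/\log 46$.

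The heart of the argument is the construction of a $28$-pipe garden hose protocol for $\EQ$ on an alphabet $\Sigma$ of size $3^{13}$. I would proceed in three stages. Stage one: identify a nontrivial \emph{garden hose permutation group} $G$ acting jointly on Alice's and Bob's configurations and preserving the equality predicate, so that the space of candidate protocols decomposes into $G$-orbits and only one representative per orbit needs to be considered. Stage two: run the handmade simulated-annealing solver on this $G$-quotient space, with moves chosen to be $G$-equivariant and a cost function counting the input pairs on which the current configuration gives the wrong answer. Stage three: verify the output by symmetry, checking correctness on one representative per orbit of the $3^{26}$ input pairs, which is tractable once $|G|$ is large. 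Plugging $m=28$ and $|\Sigma|=3^{13}$ into the composition bound then yields
\[
\GH(\EQ_n) \;\leq\; \frac{28}{\log 3^{13}}\, n + O(1) \;\approx\; 1.359\, n + O(1).
\]

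The main obstacle is stage one, together with the annealing stage it enables. The raw configuration space for a $28$-pipe, $3^{13}$-letter protocol is astronomically large, and without a rich enough symmetry group $G$ the solver will not find a valid configuration in any reasonable time; at the same time, $G$ has to be chosen so that at least one optimal $28$-pipe protocol lies in a $G$-invariant family, or the restricted search will miss all optima. Pinning down such a $G$, and certifying the resulting annealed configuration on every orbit of inputs, is the delicate combinatorial content of the argument; the composition lemma and the final arithmetic are essentially bookkeeping.
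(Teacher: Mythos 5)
Your outer reduction is the same as the paper's: a composition lemma showing that a single $m$-pipe block protocol for equality on a $k$-letter alphabet yields $\GH(\EQ_n)\le (m/\log k)\,n+O(1)$ (the paper phrases this as: a $k\times k$ permutation submatrix of the configuration matrix $M_m$ tensors up to a $k^t\times k^t$ permutation submatrix of $M_{mt}$, with Alice chaining the water-out pipe of each block to the water-in pipe of the next). The gap is in how you obtain the $28$-pipe, $3^{13}$-letter block, which is the entire content of the theorem. You propose to find it by simulated annealing over a $G$-quotient of the configuration space, using a symmetry group only to prune the search and to speed up verification; you then correctly flag that choosing $G$ and making the annealing converge is the unresolved ``delicate combinatorial content.'' That is precisely the part that is missing, and the paper's own experiments indicate your route would not close it: their annealing solver tops out around $m=12$ (size-$395$ submatrix, ratio $\approx 1.391$) after a day of computation, and nothing in your proposal supplies the candidate $G$ or explains why an optimum should lie in a $G$-invariant family.

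The paper's actual mechanism is different in kind: the group is not a pruning device but \emph{is} the solution. One fixes a single base configuration pair $(A(1),B(1))$ (one matching for Alice with one open pipe, one matching for Bob) and a permutation group $G\le S_m$ acting on the pipe labels; the protocol's configurations are the orbit $\{(A(g),B(g)):g\in G\}$, so the alphabet is $G$ itself and $k=|G|$. For Theorem~\ref{mainthm} the paper exhibits explicit generators of a group $W_3\cong C_3\wr C_3\wr C_3\le S_{28}$ of order $3^{13}$ (obtained by generalizing a $W_2\cong C_3\wr C_3\le S_{10}$ instance found by checking conjugacy classes of subgroups of small symmetric groups), and verifies by computer that the orbit forms a permutation submatrix --- a check that, by $G$-invariance, reduces to simulating $(A(1),B(h))$ for each of the $3^{13}$ elements $h$ rather than searching anything. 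Without this explicit base construction and group (or some substitute for them), your argument does not produce the required $28$-pipe block, so the proof is incomplete at its central step.
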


\section{Matrix idea}

Similar to the role of matrix in communication complexity, here we introduce the \textit{configuration matrix} $M_m$ for garden hose model where $m$ is the number of water pipes. 
Rows(columns) in $M_m$ indicate all possible configurations for Alice(Bob). Each entry is either 0 or 1 determined by which direction the water comes out 
according to its corresponding row and column. Below explicitly describes all elements in $M_3$ and $M_4$, two smallest nontrivial configuration matrices.

$$ M_3 = \begin{matrix}
 & \begin{matrix}
\text{ 12 } & \text{13 } & \text{23 }
\end{matrix} \\ 
\begin{matrix}
01\\ 
02\\ 
03
\end{matrix} &  \begin{bmatrix}
\text{ 1 } & \textbf{ \textcolor{red}{1} } & \textbf{ \textcolor{red}{0} }\\ 
\text{ 1 } & \textbf{ \textcolor{red}{0} } & \textbf{ \textcolor{red}{1} }\\ 
\text{ 0 } & \text{ 1 } & \text{ 1 }
\end{bmatrix}
\end{matrix}$$

$$M_4 = \begin{matrix}
 & \begin{matrix}
\text{ 12 } & \text{13 } & \text{14 } & \text{23 } & \text{24 } & \text{34 }
\end{matrix} \\ 
\begin{matrix}
01\\ 
02\\ 
03\\
04\\
01,23\\
01,24\\
01,34\\
02,13\\
02,14\\
02,34\\
03,12\\
03,14\\
03,24\\
04,12\\
04,13\\
04,23
\end{matrix} &  \begin{bmatrix}
\text{ 1 } & \text{ 1 } & \text{ 1 } & \text{ 0 } & \text{ 0 } & \text{ 0 }\\ 
\text{ 1 } & \text{ 0 } & \text{ 0 } & \text{ 1 } & \text{ 1 } & \text{ 0 }\\ 
\text{ 0 } & \text{ 1 } & \text{ 0 } & \text{ 1 } & \text{ 0 } & \text{ 1 }\\ 
\text{ 0 } & \text{ 0 } & \text{ 1 } & \text{ 0 } & \text{ 1 } & \text{ 1 }\\ 
\textbf{ 0 } & \textbf{ 0 } & \textbf{ \textcolor{red}{1} } & \textbf{ 0 } & \textbf{ 0 } & \textbf{ 0 }\\ 
\textbf{ 0 } & \textbf{ \textcolor{red}{1} } & \textbf{ 0 } & \textbf{ 0 } & \textbf{ 0 } & \textbf{ 0 }\\ 
\textbf{ \textcolor{red}{1} } & \textbf{ 0 } & \textbf{ 0 } & \textbf{ 0 } & \textbf{ 0 } & \textbf{ 0 }\\ 
\textbf{ 0 } & \textbf{ 0 } & \textbf{ 0 } & \textbf{ 0 } & \textbf{ \textcolor{red}{1} } & \textbf{ 0 }\\ 
\textbf{ 0 } & \textbf{ 0 } & \textbf{ 0 } & \textbf{ \textcolor{red}{1} } & \textbf{ 0 } & \textbf{ 0 }\\  
\text{ 1 } & \text{ 0 } & \text{ 0 } & \text{ 0 } & \text{ 0 } & \text{ 0 }\\ 
\textbf{ 0 } & \textbf{ 0 } & \textbf{ 0 } & \textbf{ 0 } & \textbf{ 0 } & \textbf{ \textcolor{red}{1} }\\ 
\text{ 0 } & \text{ 0 } & \text{ 0 } & \text{ 1 } & \text{ 0 } & \text{ 0 }\\ 
\text{ 0 } & \text{ 1 } & \text{ 0 } & \text{ 0 } & \text{ 0 } & \text{ 0 }\\ 
\text{ 0 } & \text{ 0 } & \text{ 0 } & \text{ 0 } & \text{ 0 } & \text{ 1 }\\ 
\text{ 0 } & \text{ 0 } & \text{ 0 } & \text{ 0 } & \text{ 1 } & \text{ 0 }\\
\text{ 0 } & \text{ 0 } & \text{ 1 } & \text{ 0 } & \text{ 0 } & \text{ 0 }
\end{bmatrix}
\end{matrix}$$


We observe that: 

\begin{lemma}
 $\GH(\EQ_n) \leq m$ if and only if $M_m$ contains a permutation submatrix of size $2^n$.
\end{lemma}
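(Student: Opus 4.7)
The plan is to give a direct dictionary between garden hose protocols for $\EQ_n$ and $2^n \times 2^n$ permutation submatrices of $M_m$, simply unpacking the definitions on both sides.

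For the forward direction, I assume $\GH(\EQ_n) \le m$ and fix maps $A, B : \{0,1\}^n \to \text{configurations}$ such that $M_m[A(x), B(y)] = 1$ exactly when $x = y$. The first key step is to verify that both $A$ and $B$ are injective. Indeed, if $A(x) = A(x')$ for distinct $x, x'$, then rows $A(x)$ and $A(x')$ are literally the same row of $M_m$, so $M_m[A(x'), B(x)] = M_m[A(x), B(x)] = 1$, contradicting $\EQ_n(x', x) = 0$; the argument for $B$ is symmetric. Listing the inputs in any fixed order $x_1, \ldots, x_{2^n}$, the $2^n$ rows $A(x_i)$ are therefore distinct, the $2^n$ columns $B(x_i)$ are distinct, and the resulting submatrix is the $2^n \times 2^n$ identity matrix, which is certainly a permutation matrix.

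For the reverse direction, suppose $M_m$ contains a $2^n \times 2^n$ permutation submatrix. After a column reordering we may assume there are row indices $r_1, \ldots, r_{2^n}$ and column indices $c_1, \ldots, c_{2^n}$ with $M_m[r_i, c_j] = 1$ iff $i = j$. Fixing any bijection $x \mapsto i(x)$ between $\{0,1\}^n$ and $\{1, \ldots, 2^n\}$, I define $A(x) := r_{i(x)}$ and $B(y) := c_{i(y)}$. Then $M_m[A(x), B(y)] = 1$ precisely when $i(x) = i(y)$, i.e.\ when $x = y$, so this gives a valid garden hose protocol computing $\EQ_n$ with $m$ pipes.

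The only subtle point in the whole argument is the injectivity observation of the forward direction, which is exactly what forces the diagonal/identity structure that defines a permutation matrix; it crucially uses that $\EQ_n$ takes the value $1$ on the diagonal $\{(x, x)\}$ and $0$ everywhere else. Beyond that, the proof is essentially a change of viewpoint between the two formulations, so no substantial obstacle is expected.
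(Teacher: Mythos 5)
Your proof is correct and follows essentially the same route as the paper's: the forward direction reads off the permutation submatrix from the configurations $\{A(x)\}$, $\{B(y)\}$, and the reverse direction labels the rows and columns of a given permutation submatrix by inputs. Your explicit injectivity check for $A$ and $B$ is a detail the paper leaves implicit, and it is a worthwhile addition since distinctness of the rows and columns is exactly what makes the selected entries a genuine $2^n \times 2^n$ submatrix.
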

\begin{proof} 
    ($\Rightarrow$) $\GH(\EQ_n) \leq m$ means $\EQ_n$ can be computed by $m$ pipes. In the configuration matrix $M_m$, the intersection of Alice's configurations $\{A(x)|x\in \{0,1\}^n\}$ and Bob's configurations $\{B(y)|y\in \{0,1\}^n\}$ is a permutation submatrix, because the entry $(A(x),B(y))$ is 1 if and only if $x=y$.
    
    ($\Leftarrow$) We take the permutation submatrix of size $2^n$. Then, we label the rows by $\{0,1\}^n$ and label the columns by a permutation of $\{0,1\}^n$, such that the intersection of the row labeled by $x$ and the column labeled by $y$ is 1 if and only if $x=y$. Finally, let $A(x)$ be the configuration indicated by the row labeled by $x$, and $B(y)$ be the configuration indicated by the column labeled by $y$. As a result, $\EQ_n$ can be computed by $m$ pipes.
\end{proof}

We already gain some information from above $M_3$ and $M_4$: 
$$\GH(\EQ_1) = 3,$$ $$\GH(\EQ_2) = 4,$$ $$\GH(\EQ_3) \geq 5.$$

To make the configuration matrix smaller, lemma~\ref{lem:lastrowblock} says we can assume Alice has only one open pipe for even $m$.  Before the lemma, we define some notations first.

For a configuration $A(x)$, we call the pipe which connects the water tap (number 0) the \emph{water-in pipe}, and we call the pipe where the water comes out (from Alice's side) in the configuration $(A(x),B(x))$ the \emph{water-out pipe}.

We divide $M_m$ into blocks 
$M_m = [M_m^{i,j}]_{1 \leq i \leq m/2, 1 \leq j \leq (m-1)/2}$ 
where $M_m^{i,j}$ consists of those rows where Alice has $i$ hoses and those columns where Bob has $j$ hoses. For example, 
$M_3 = M_3^{1,1}, M_4 = 
\begin{bmatrix} 
 M_4^{1,1} \\ 
M_4^{2,1}
\end{bmatrix}$. 

\begin{lemma}
\label{lem:lastrowblock}
For even $m \geq 4$, if $M_m$ contains a permutation submatrix of size $k$, then the last-row-block of $M_m$, 
$[M_m^{m/2,1},M_m^{m/2,2},\cdots,M_m^{m/2,(m-2)/2}]$, also contains a permutation submatrix of size $k$.
\end{lemma}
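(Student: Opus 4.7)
My plan is to lift any permutation submatrix in $M_m$ into the last row block one row at a time: given a $k\times k$ permutation submatrix with rows $r_1,\ldots,r_k$ and columns $c_1,\ldots,c_k$ chosen so that $[M_m]_{r_i,c_j}=\delta_{ij}$, I would construct for each $i$ a configuration $r_i'$ that uses exactly $m/2$ hoses and satisfies $[M_m]_{r_i',c_j}=[M_m]_{r_i,c_j}$ for every $j$. Then $\{r_i'\}\times\{c_j\}$ is a $k\times k$ permutation submatrix lying entirely in the last row block.

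The key technical fact is that when Alice's tap is connected in $r_i$ (which may be assumed, since a row with an open tap is all-ones and so cannot appear in a nontrivial permutation submatrix for $k\geq 2$), the water path for $(r_i,c_j)$ touches the set $U\subseteq\{1,\ldots,m\}$ of pipes left unmatched by $r_i$ in at most one pipe. Writing the water trajectory as an alternating walk $0=q_0,q_1,\ldots,q_t$ that uses Alice's hoses from even- to odd-position vertices and Bob's hoses from odd- to even-position ones, every intermediate even-position vertex must be matched in $r_i$ (or the walk would terminate earlier); the terminal even-position vertex, if the output is $1$, is unmatched in $r_i$ and thus lies in $U$; and when the output is $0$ the terminal vertex is at an odd position on Bob's side, matched in $r_i$ to its predecessor on the walk and hence outside $U$. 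So for $j=i$ the path ends at a single pipe $w_i\in U$, while for $j\neq i$ the path avoids $U$ entirely. I then define $r_i'$ by adjoining to $r_i$ any perfect matching on $U\setminus\{w_i\}$; that set has even cardinality $m-2|r_i|$ since $|U|=m+1-2|r_i|$ is odd, so exactly $m/2-|r_i|$ new hoses are added and the total becomes $m/2$. Because every new hose sits on pipes that are absent from every water path $(r_i,c_j)$, each path, and hence each output, is preserved, so $[M_m]_{r_i',c_j}=\delta_{ij}$ as required.

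What remains is to check that $r_1',\ldots,r_k'$ are distinct, and this is immediate: if $r_{i_1}'=r_{i_2}'$ then they induce the same output row on $c_1,\ldots,c_k$, contradicting the fact that $r_{i_1}$ and $r_{i_2}$ are distinct rows of a permutation submatrix. The main (and really only) obstacle is carrying out the path-tracking argument of the previous paragraph with care, since it requires tracking which side of the pipeline each vertex of the alternating walk sits on and whether it is matched in $r_i$; once that invariant is pinned down, the construction and verification are essentially bookkeeping.
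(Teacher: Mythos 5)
Your proposal is correct and takes essentially the same approach as the paper's proof: complete Alice's partial matching by arbitrarily pairing up her unused pipes other than the water-out pipe, and observe that the added hoses lie off every water path, so all outputs against the chosen columns are preserved. Your alternating-walk analysis and the distinctness check for the new rows simply make explicit what the paper asserts informally.
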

\begin{proof}
    We take the permutation submatrix of size $k$, denoted by $S\times T$. If $S$ is not contained in the last-row-block, then we take a row $A(x)\in S$, which is outside of the last-row-block. In configuration $A(x)$, Alice has $m$ pipes: a water-in pipe, a water-out pipe, some pairs of pipes connected with hoses, and others. Because $m$ is even, there are even ``other'' pipes. We connect them by hoses arbitrarily. After that, every pipe is connected with a hose, except the water-out pipe. Therefore, this configuration, denoted by $A'(x)$, is in the last-row-block. We replace $A(x)$ by $A'(x)$, and we will show that it is still a permutation submatrix.
    
    First, we will show that water comes out from Alice's side when $A'(x)$ meets $B(x)$. The path water flows in configuration $(A'(x),B(x))$ is the same as the one in configuration $(A(x),B(x))$ and the water comes out from the same pipe and same side, because the new hoses we added are not filled with water at all.
    
    Second, we will show that water comes out from Bob's side when $A'(x)$ meets $B(y)$ if $x\neq y$. Again, the water path did not change, because we did not remove Alice's hoses or add Bob's hoses.
    
    If we repeat the replacement many times, we can move $S\times T$ into the last-row-block.
    
\end{proof}

By the lemma above, we know that if $m$ is even, in order to search maximum size of permutation submatrix in $M_m$, we can restrict ourselves to the last-row-block of $M_m$, which means on Alice's side there is only one pipe without any hose connection. In the following construction, we assume that $m$ is always even.

Actually we do not have to restrict ourselves to permutation matrices of size power two for the following lemma. 

\begin{lemma}
If $M_m$ contains a permutation submatrix of size $k$, then $M_{mt}$ contains a permutation submatrix of size $k^t$ for every $t=1,2,\cdots$. 
\end{lemma}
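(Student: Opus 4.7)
The plan is a tensor-product style construction: view the $mt$ pipes as $t$ consecutive ``gadgets'' of $m$ pipes each, run an independent copy of the witnessing $k$-input protocol inside each gadget, and chain them together on Alice's side so that water exiting Alice's side of gadget $s$ is immediately fed in as the water-in of gadget $s+1$. A mismatch in any single gadget instead sends the water out on Bob's side of that gadget, where it stays.

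First I fix the permutation submatrix: by relabeling, take Alice's configurations $A_1,\ldots,A_k$ and Bob's configurations $B_1,\ldots,B_k$ in $M_m$ so that $(A_i,B_j)=1$ iff $i=j$. For each $i$, let $a(i)$ denote the pipe paired with $0$ in $A_i$ (the water-in pipe of $A_i$), and let $o(i)$ denote the pipe on Alice's side at which water exits in the execution of $(A_i,B_i)$ (the water-out pipe of $A_i$, which is necessarily unpaired in $A_i$). For an input pair $(i_1,\ldots,i_t),(j_1,\ldots,j_t)\in[k]^t$, Bob simply places $B_{j_s}$ inside gadget $s$ for each $s$; his hoses never leave their gadget. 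On Alice's side, gadget $1$ receives $A_{i_1}$ translated verbatim (tap hose included), while each gadget $s\geq 2$ receives $A_{i_s}$ translated but with the tap hose removed, so that $a(i_s)$ becomes free. Alice then adds, for $s=2,\ldots,t$, a chaining hose between the pipe $o(i_{s-1})$ of gadget $s-1$ and the pipe $a(i_s)$ of gadget $s$. Validity of the resulting partial matchings is immediate: each chaining hose's two endpoints are unpaired in the base translations, and Bob's hoses are internal to their gadgets.

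The water-flow analysis is the main (though modest) content. The key observation is that replacing a gadget's tap hose by a chaining hose terminating at the same pipe $a(i_s)$ changes only where water enters the gadget, not its subsequent trajectory, since that trajectory is determined solely by the two partial matchings inside the gadget. Hence by induction on $s$, if $i_r=j_r$ for every $r\leq s$, then water enters gadget $s$ at $a(i_s)$ and exits on Alice's side at $o(i_s)$; in particular, full equality yields water on Alice's side at $o(i_t)$. Conversely, if $s^*$ is the least index with $i_{s^*}\neq j_{s^*}$, water reaches $a(i_{s^*})$ and is sent out on Bob's side by the permutation-submatrix property, and since Bob's hoses are internal to their gadgets it cannot return. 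Thus the $k^t\times k^t$ intersection of the product configurations forms a permutation submatrix of $M_{mt}$. The only real subtlety is formalizing this locality observation that inserting the chaining hose does not perturb the internal trajectory of water through a gadget; everything else reduces to routine pipe-labeling bookkeeping.
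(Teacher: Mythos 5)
Your construction is exactly the paper's: partition the $mt$ pipes into $t$ blocks, let Bob run $B_{j_s}$ independently in each block, let Alice run the $A_{i_s}$'s but replace each block's tap hose (after the first) with a chaining hose from the previous block's water-out pipe, and analyze the flow by the least index of disagreement. The proof is correct and essentially identical to the one in the paper.
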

\begin{proof}
    In $M_m$, we denote the rows and columns of the permutation submatrix of size $k$ by $\{A(x)|x\in [k]\}$ and $\{B(y)|y\in [k]\}$, respectively, where $A(x)$ and $B(y)$ intersect at entry 1 if and only if $x=y$. In $M_{mt}$, we will show that $\{A'(x')|x'\in [k]^t\} \times \{B'(y')|y'\in [k]^t\}$ is permutation submatrix of size $k^t$.
    
    First, we define $A'$ and $B'$. We group $mt$ pipes into $t$ blocks, where each block has $m$ pipes.
    
    $B'$ is just the product of $B$ in each block, i.e. $$B'(y')=\{ \{im+a,im+b\}|\{a,b\}\in B(y'_i),i\in [t] \}.$$ The construction of $A'$ is almost the same as $B'$, but we connect the water-out pipe to the water-in pipe in the next block. Formally speaking, 
    \begin{eqnarray*}
        A'(x')= &     & \{\{im+a,im+b\}|\{a,b\}\in B(y'_i),\ a,b\neq 0,\ i\in [t] \} \\
                &\cup & \{\{0,\waterin(A(x'_0))\}\} \\
                &\cup & \{\{im+\waterout(A(x_i)),(i+1)m+\waterin(A(x_{i+1}))\} | i=1,2,\cdots,t-1 \}.
    \end{eqnarray*}
    
    Second, we show this construction is a permutation submatrix. 
    \begin{itemize}
        \item $x'=y' \Rightarrow M_{mt}[A'(x'),B'(y')]=1$. For block $i$ ($i\in [t]$), water comes into the water-in pipe of block $i$ (number $im+\waterin(A(x'_i))$), and goes out from the water-out pipe (number $im+\waterout(A(x_i))$). Then, it flows to the water-in pipe of the next block. Finally, it goes out from the water-out pipe of the last block from Alice's side.
        \item $x'\neq y' \Rightarrow M_{mt}[A'(x'),B'(y')]=0$. Let $i$ be the minimum number that $x'_i\neq y'_i$. Similar to the case $x'=y'$, water flows thought the same path in blocks $0,1,\cdots,i-1$, and arrives at the water-in pipe of block $i$. Since $x'_i\neq y'_i$, water goes out from Bob's side in block $i$.
    \end{itemize}

\end{proof}

By the two lemmas above, we have the following corollary.
\begin{corollary}
If there exist $m$ and $k$ such that $M_m$ contains a permutation submatrix of size $k$, then the 
garden hose complexity of the equality: $$\GH(\EQ_n) \leq \frac{m}{\log k} n + O(1). $$ 
\end{corollary}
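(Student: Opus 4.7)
The plan is to combine the two preceding lemmas in a straightforward way. Given that $M_m$ contains a permutation submatrix of size $k$, I would first invoke the second lemma to amplify: for every positive integer $t$, the matrix $M_{mt}$ contains a permutation submatrix of size $k^t$. The idea is to pick $t$ just large enough so that $k^t$ exceeds $2^n$, thereby giving us a permutation submatrix large enough to encode $\EQ_n$.

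Concretely, I would set $t = \lceil n / \log k \rceil$, so that $k^t \geq 2^n$. Any $2^n \times 2^n$ submatrix obtained by restricting to $2^n$ of the rows and the corresponding $2^n$ columns of the $k^t \times k^t$ permutation submatrix is itself a permutation submatrix of size $2^n$. Hence $M_{mt}$ contains a permutation submatrix of size $2^n$, and by the first lemma (the ``iff'' characterization), this yields
\[
\GH(\EQ_n) \leq mt \leq \frac{m}{\log k}\, n + m.
\]
Since $m$ and $k$ are fixed constants independent of $n$, the trailing $m$ is absorbed into the $O(1)$ term, producing the claimed bound.

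I do not expect any genuine obstacle here: the only points to be careful about are verifying that logarithms are taken in base $2$ (so that $k^t \geq 2^n$ is equivalent to $t \geq n/\log k$), and noting that restricting a permutation submatrix of size $k^t$ to any $2^n$ row indices together with the matching $2^n$ column indices still gives a permutation submatrix. The corollary is essentially a packaging step combining the structural equivalence from the first lemma with the tensor-power amplification from the second.
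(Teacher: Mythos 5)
Your proposal is correct and follows exactly the route the paper intends: the paper gives no written proof beyond ``by the two lemmas above,'' and your argument (amplify via the tensor-power lemma with $t = \lceil n/\log k\rceil$, restrict to a $2^n\times 2^n$ permutation submatrix, then apply the equivalence lemma to get $\GH(\EQ_n)\le mt \le \frac{m}{\log k}n+m$) is precisely the packaging step being elided. No gaps.
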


For example when $m=4$, $k=6$ thus $\frac{m}{\log k} = \frac{4}{\log 6} \approx 1.547 $. 
In \cite{MM12} it is shown that when $m=8$, $k = 46$ thus $\frac{m}{\log k} = \frac{8}{\log 46} \approx 1.448 $ which is the best known result before this paper.   

For larger $m$, we are devising computer programs to search for the maximum size of permutation submatrix in $M_m$. We need more properties to search faster.

\begin{lemma}
    Bob's one configuration cannot cover another one, i.e. $B(y)\subseteq B(y') \Rightarrow y=y'$.
\end{lemma}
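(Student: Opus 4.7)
The plan is to argue by contradiction. Suppose $B(y)\subseteq B(y')$ with $y\neq y'$, and have Alice play configuration $A(y)$. Since $\EQ_n(y,y)=1$, in the pair $(A(y),B(y))$ the water must exit on Alice's side at some pipe $w$. My goal would be to show that in $(A(y),B(y'))$ the water follows the exact same trajectory and exits at the same $w$, forcing $\EQ_n(y,y')=1$ and contradicting $y\neq y'$.

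The first step is to formalize the water flow as a simple alternating path: starting from the tap, the water traverses a sequence of pipes, at each stage following an Alice hose (when it arrives at an Alice-side endpoint) or a Bob hose (when it arrives at a Bob-side endpoint), until it reaches an endpoint whose owner has no hose there. Because each configuration is a partial matching, every pipe is incident to at most one hose per player, so the trajectory is deterministic and visits each pipe at most once.

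The second step is the direct comparison. Alice uses the same matching $A(y)$ in both runs, so every Alice step is identical. For each Bob step taken in $(A(y),B(y))$, the hose used lies in $B(y)\subseteq B(y')$; and since $B(y')$ is itself a matching, the partner of that pipe in $B(y')$ is the same partner as in $B(y)$. By induction on the step number, the two trajectories therefore coincide pipe by pipe up to the terminal pipe $w$, where Alice still has no hose, so the water exits on Alice's side in $(A(y),B(y'))$ as well.

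The place where I would be most careful, and what I view as the main obstacle, is ruling out an earlier divergence in $(A(y),B(y'))$: a priori the flow could exit on Bob's side at some earlier pipe where $B(y)$ had no hose but $B(y')$ does, and then $B(y')$'s extra hose would push the flow past that point. The resolution is that such an exit on Bob's side in the original $(A(y),B(y))$ run never occurs at all — the original run exits on Alice's side at $w$ — so at every Bob-side endpoint visited strictly before $w$, Bob \emph{does} have a hose already in $B(y)$, and the inductive identification of the two trajectories goes through unimpeded. This closes the contradiction.
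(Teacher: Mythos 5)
Your proof is correct and follows the same route as the paper's: the water path in $(A(y),B(y))$ is unaffected by the extra hoses in $B(y')\setminus B(y)$, so the water still exits on Alice's side in $(A(y),B(y'))$, forcing $y=y'$. The paper states this in two sentences; you have simply filled in the details (determinism of the alternating path, and the observation that every Bob-side endpoint visited before the exit already carries a hose of $B(y)$, whose partner is unchanged in the larger matching $B(y')$).
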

\begin{proof}
    The water comes out from Alice's side in configuration $(A(y),B(y))$. If we add more hoses on Bob's side, the water path does not change. So, the water comes out from the same pipe on Alice's side in configuration $(A(y),B(y'))$. Therefore, $y=y'$.
\end{proof}

To find a large permutation submatrix, we first want to find a large set of Bob's configurations such that one doesn't cover another. In the program, we assume each configuration in the set has the same number of pipes, so we don't need to worry about the covering problem. Moreover, to maximize the size of the set, we assume Bob's hoses covers almost half of the pipes. Appendix \ref{sec:saprogram} gives more ideas of our simulated annealing program.

The following table are results of computer program. 


\begin {table} [h]
\begin{tabular}{c||c|c|c|c|c}
$m$ & 4 & 6 & 8 & 10 & 12 \\
\hline
$k$ & 6 & 15 & 48 & 144 & 395 \\
\hline 
running time & $<1$ sec & $<1$ sec & $10$ sec & $1$ hour & $1$ day \\
\hline
$m/\log k$ & 1.547... & 1.535... & 1.432... & 1.394... & 1.391... \\ 
\end{tabular}
\caption{Search results on even $m\le12$. \label{P1}}
\end {table}



\section{Some symmetric solutions} \label{symsol}

From previous section we know that when $m=12$, $k=395$, $\frac{m}{\log k} = \frac{12}{\log 395} \approx 1.391$ which is already better than 1.448 in \cite{MM12}. However, if we want better, unstructured searching program does not help any more. One possible hope is to study the structure of solutions, in particular we are interested in those symmetries 
behind some solutions. In this section we use one example to explain the basic idea while generalize in next. 

It is known that $M_4$ contains a permutation submatrix of size 6. Here we write down one solution(the permutation submatrix). 

$$\begin{matrix}
 & \begin{matrix}
\text{ 12 } & \text{13} & \text{14 } & \text{23} & \text{24 } & \text{34 }
\end{matrix} \\ 
\begin{matrix}
01,23\\
01,24\\
01,34\\
02,13\\
02,14\\
03,12
\end{matrix} &  \begin{bmatrix}
\text{ 0 } & \text{ 0 } & \text{ \textcolor{red}{1} } & \text{ 0 } & \text{ 0 } & \text{ 0 }\\ 
\text{ 0 } & \text{ \textcolor{red}{1} } & \text{ 0 } & \text{ 0 } & \text{ 0 } & \text{ 0 }\\ 
\text{ \textcolor{red}{1} } & \text{ 0 } & \text{ 0 } & \text{ 0 } & \text{ 0 } & \text{ 0 }\\ 
\text{ 0 } & \text{ 0 } & \text{ 0 } & \text{ 0 } & \text{ \textcolor{red}{1} } & \text{ 0 }\\ 
\text{ 0 } & \text{ 0 } & \text{ 0 } & \text{ \textcolor{red}{1} } & \text{ 0 } & \text{ 0 }\\  
\text{ 0 } & \text{ 0 } & \text{ 0 } & \text{ 0 } & \text{ 0 } & \text{ \textcolor{red}{1} }
\end{bmatrix}
\end{matrix}$$

Here Alice and Bob are computing the equality function $f(x,y)$ where $x,y \in [6]$. On input $x$, Alice makes a configuration $A(x)$, while 
on input $y$, Bob makes a configuration $B(y)$. Thus we can also write the above solution in the following form. 
$$A(1) = \{01,23\}, B(1) = \{14\} $$ 
$$A(2) = \{01,24\}, B(2) = \{13\} $$ 
$$A(3) = \{01,34\}, B(3) = \{12\} $$
$$A(4) = \{02,13\}, B(4) = \{24\} $$ 
$$A(5) = \{02,14\}, B(5) = \{23\} $$ 
$$A(6) = \{03,12\}, B(6) = \{34\} $$

The symmetry of the above solution can be seen in at least two ways, although equivalent. 

First let us assume that Alice and Bob only know how to connect pipes on input $(1,1)$, i.e. they only have $$A(1) = \{01,23\}, B(1) = \{14\} $$ now the question is: how can they
get other configurations from $(A(1), B(1))$? One possible way is to use a group to act on $(A(1), B(1))$. By doing this, hopefully they can get all other configurations.
In other words, the final solution is invariant under some group action. 

Two groups fit for our setting here, the alternating group $A_4$ and the symmetric group $S_4$. Take $A_4$ for example. 
$$A_4 = \{(1), (123), (132), (124), (142), (134), (143), (234), (243), (12)(34), (13)(24), (14)(23)\}$$
For an element $\sigma \in A_4$, define its action on $A(1), B(1)$ as $$A(\sigma) = \{0\sigma(1),\sigma(2)\sigma(3)\}, B(\sigma) = \{\sigma(1)\sigma(4)\} $$ 
then we have 
$$\begin{matrix}
A((1)) = \{01,23\}, B((1)) = \{14\} & \color{red}{\rightarrow} & A(1), B(1) \\
A((123)) = \{02,13\}, B((123)) = \{24\} & \color{red}{\rightarrow} & A(4), B(4) \\
A((132)) = \{03,12\}, B((132)) = \{34\} & \color{red}{\rightarrow} & A(6), B(6) \\
A((124)) = \{02,34\}, B((124)) = \{12\} & & \\
A((142)) = \{04,13\}, B((142)) = \{24\} & & \\
A((134)) = \{03,24\}, B((134)) = \{13\} & & \\
A((143)) = \{04,12\}, B((143)) = \{34\} & & \\
A((234)) = \{01,34\}, B((234)) = \{12\} & \color{red}{\rightarrow} & A(3), B(3)  \\
A((243)) = \{01,24\}, B((243)) = \{13\} & \color{red}{\rightarrow} & A(2), B(2) \\
A((12)(34)) = \{02,14\}, B((12)(34)) = \{23\} & \color{red}{\rightarrow} & A(5), B(5) \\
A((13)(24)) = \{03,14\}, B((13)(24)) = \{23\} & & \\
A((14)(23)) = \{04,23\}, B((14)(24)) = \{14\} & &
\end{matrix}$$

Note that the group action also generates other configurations which are not in the solution, for example $A((124)) = \{02,34\}$. However, this should not be a problem since 
if we look into the row of $A((124))$ in matrix $M_4$, it is exactly the same as row $A((234)) = \{01,34\}$, 
thus we still get a solution by choosing any one of $A((234))$ or $A((124))$. 



Another way to look into the symmetry of the above solution is through geometry. Treat four pipes as four vertices of a regular tetrahedron. Let vertex $1$ be water-in 
pipe and vertex $4$ be water-out pipe, 2 and 3 are connected by Alice while 1 and 4 are connected by Bob. This is the initial configuration $(A(1), B(1))$ as before, 
and call it base construction. Now rotate or reflect the tetrahedron, then the rotation group $A_4$ or the symmetric group $S_4$ of the tetrahedron are exactly 
the same as we discussed from the first perspective.  Also the meaning of equivalence between $A((234))$ and $A((124))$ discussed above can be explained 
here as we are not making any difference by switching the water-in and water-out pipe since it does not change the side where water comes out.

\begin{figure}[htb]
\begin{center}
\includegraphics[scale=0.7]{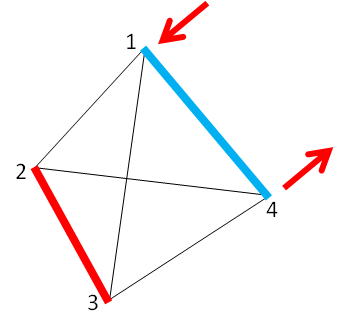}
\caption{ A geometric explanation of symmetric solutions }
\end{center}
\end{figure}

\section{Group invariant framework}

The symmetry discussion in previous section can be summarized by the following formula:
$$\fbox{\text{ symmetric solution } = \text{ base construction } + \text{group action} } $$
Now we generalize this idea here. 

Let $m$ be an even integer as before. Let $t$ be the number of hoses that Bob uses. Without loss of generality we assume that on input $(1,1)$, 
$$A(1) = \{01,23,\cdots,(2t-2)(2t-1), (2t+1)(2t+2), \cdots, (m-1)m\}$$ 
$$B(1) = \{12, 34, \cdots, (2t-1)(2t)\}$$
Here $(A(1), B(1))$ is the \textit{base construction}. 

Let $G \leq S_m$ be a permutation group. For every $g \in G$, define the \textit{group action} as 
$$A(g) = \{0g(1),g(2)g(3),\cdots,g(2t-2)g(2t-1), g(2t+1)g(2t+2), \cdots, g(m-1)g(m)\}$$
\begin{equation}
 B(g) = \{g(1)g(2), g(3)g(4), \cdots, g(2t-1)g(2t)\}. \label{gaction}
\end{equation}

Now we can check matrix $M_m$ if the intersection of rows $\{A(g) | g \in G \}$ and columns $\{B(g) | g \in G \}$ forms a permutation submatrix. If so, then we call $G$ an $(m,t)$-\textbf{garden hose permutation group}. 
However, in some cases, the intersection may contain repeated rows or columns. If we remove those repeated ones 
and the rest still forms a permutation submatrix, then we call $G$ a \textbf{weak} $(m,t)$-garden hose permutation group. Correspondingly, call the solution 
$\{A(g), B(g) | g \in G \}$ a (weak) group invariant solution. For example, $S_4$ and $A_4$ discussed in section \ref{symsol} are both $(4,1)$-weak garden hose permutation groups.

Now we are reducing the problem of the garden hose complexity for equality function to the problem of deciding which group is a garden hose permutation group. 
If there is an $(m,t)$-garden hose permutation group $G$, then the garden hose complexity for equality function 

\begin{equation} \label{ghgratio}
 \GH(\EQ_n) \leq \frac{m}{\log |G|} n + O(1).
\end{equation}

\section{Group invariant construction} \label{gicon}

For a given $m$, we can check all conjugacy classes of the subgroups of symmetric group $S_m$, to see which one is a (weak) garden hose permutation group. Each of them gives an upper bound for $\GH(\EQ_n)$
according to formula (\ref{ghgratio}). Tables of Marks~\cite{tablesofmarks} shows conjugacy classes of the subgroups of small symmetric groups. Note that to check if a submatrix in $M_m$ is a permutation matrix is much easier than searching a permutation matrix in it. Table \ref{ghglist} lists some results. 

\begin{figure}[htb] \label{ghglist}
\begin{center}
\includegraphics[scale=0.5]{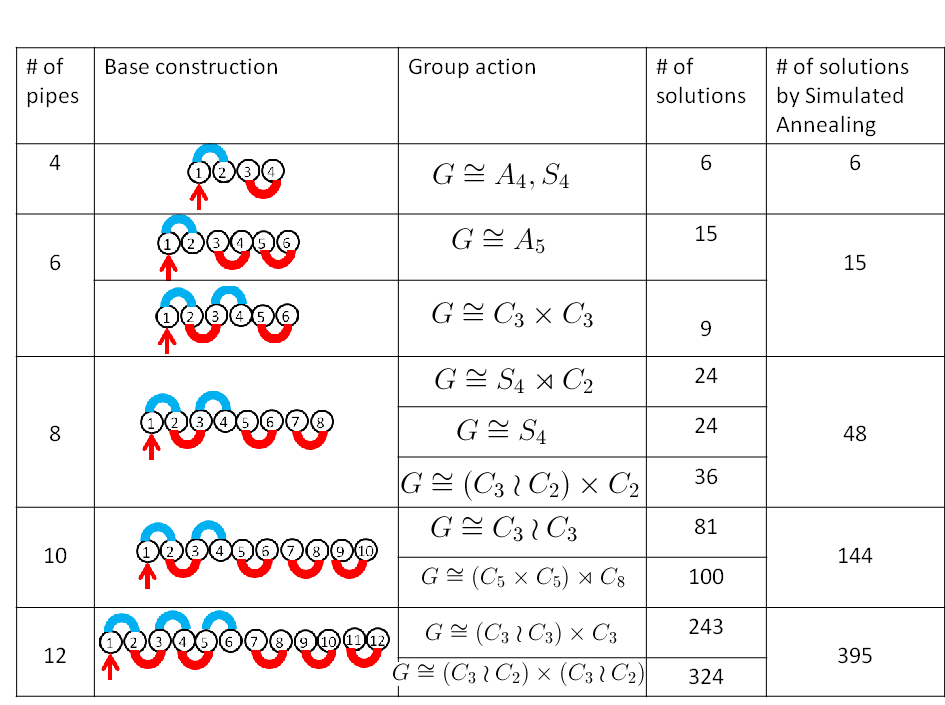}
\caption{Lists of some (weak) garden hose permutation groups }
\end{center}
\end{figure}

\subsection{Wreath product construction}
There is a particular group which stands out in our search results. Denote it by $W_2$. $W_2(\leq S_{10})$ has generators 
$$( 3, 5, 10), ( 2, 7, 8), ( 1, 2, 3)( 4, 7, 10)( 5, 6, 8).$$ For those familiar with wreath product\cite{OOR04}, we know that $$W_2 \cong C_3 \wr C_3, $$
the wreath product of two cyclic group of order $3$.
Computer program tells us that $W_2$ is a $(10,2)$ garden hose permutation group. Since $|W_2|=3^4=81$, by formula (\ref{ghgratio}) we get 
$\frac{m}{\log |W_2|} = \frac{10}{\log 81} \approx 1.577$. This ratio is worse than best result by our simulated annealing program, 
and it is even worse than 1.448 by Margalit and Matsliah\cite{MM12}. However, the reason that why it stands out from others is its 
special structure: the wreath product, from which there is a way we can generalize, in hope of beating best current result.

A natural question is: is there any large $l$ such that there exists a garden hose permutation group $W_l$ and 
$W_l \cong  \underset{l}{\underbrace{C_3 \wr \cdots \wr C_3}}$? We conjecture that this is the case. 

\begin{conjecture} [3-tree conjecture] \label{3treeconj}
 For every $l \geq 2$ there exists a group $W_l \leq S_{3^l+1}$, $W_l \cong  \underset{l}{\underbrace{C_3 \wr \cdots \wr C_3}} $ and
 $W_l$ is a $(3^l+1, t)$ garden hose permutation group for some $t$. 
\end{conjecture}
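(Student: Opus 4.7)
The natural plan is induction on $l$, exploiting the recursive identity $W_{l+1} \cong W_l \wr C_3 \cong W_l^3 \rtimes C_3$. The base case $l=2$ is already established by the explicit generators of $W_2$ listed above and verified computationally, so the inductive hypothesis is exactly what one needs to bootstrap.

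For the inductive step, I would partition the $3^{l+1}+1 = 3 \cdot 3^l + 1$ pipes into three blocks $B_1, B_2, B_3$ of size $3^l$ each, plus one distinguished pipe $p^{*}$. Inside each $B_i$ place a copy of the $W_l$ base construction $(A_l(e), B_l(e))$, so that the three copies of $W_l$ act independently on their respective blocks, while the top-level $C_3$ cyclically permutes $B_1 \to B_2 \to B_3 \to B_1$. To chain the three sub-constructions into a single garden hose computation, the base water path should traverse the blocks sequentially: within each $B_i$ the water follows the inductive $W_l$ subpath, and Alice uses extra hoses to connect the water-out pipe of $B_i$ to the water-in pipe of $B_{i+1}$. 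The fixed pipe $p^{*}$ is a natural candidate to serve as a synchronization hub, mirroring the role of pipe $9$ in the $l=2$ case.

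Granting this design, the permutation submatrix property splits into two cases for $g \neq h$ in $W_{l+1}$. If $g$ and $h$ differ in a $W_l$-component acting on some block $B_i$, then the inductive hypothesis forces the water to exit on Bob's side while still inside $B_i$, before the chain can continue, yielding a zero entry. If they agree in all three $W_l$-components but differ in the top-level $C_3$, then the three blocks are cyclically offset between Alice's and Bob's sides, and one has to show that the bridging hoses fail to splice correctly, so again the water exits on Bob's side.

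The main obstacle is to design the inter-block chaining so that it is simultaneously $C_3$-equivariant with respect to the top-level action, compatible with the water-in and water-out pipes inherited from the inductive construction, and --- crucially --- free of configuration collisions $A(g) = A(h)$ or $B(g) = B(h)$ for $g \neq h$, since the conjecture demands an actual (not merely \emph{weak}) garden hose permutation group, as $W_2$ indeed is. The chaining hoses introduced at the top level are particularly delicate: an overly symmetric choice tends to identify configurations differing only by a cyclic relabeling of blocks, so one likely needs a carefully asymmetric choice involving $p^{*}$ to break these coincidences. Establishing this rigorously for all $l$, rather than just verifying it computationally for small $l$, is where the real difficulty sits, and it may ultimately require isolating a structural invariant --- perhaps a tree-automata-style normal form for the water flow --- that distinguishes configurations up to the group action and thereby rules out collisions uniformly in $l$.
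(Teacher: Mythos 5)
The statement you are addressing is a \emph{conjecture}: the paper does not prove it, and explicitly lists proving it as an open problem in the ``Further research'' section. The only evidence in the paper is computational verification of the cases $l=2$ (the group $W_2 \leq S_{10}$) and $l=3$ (the group $W_3 \leq S_{28}$, which yields Theorem~\ref{mainthm}). So there is no proof of yours to compare against the paper's; the relevant question is whether your inductive plan actually closes the problem, and it does not --- as you yourself concede in the final paragraph, the decisive step is left open.

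Beyond that honest admission, there is a concrete quantitative obstruction to the design you sketch. Each copy of the $W_l$ construction lives on $3^l+1$ pipes (that is what ``$W_l$ is a $(3^l+1,t)$ garden hose permutation group'' means), so three disjoint copies chained in sequence would occupy $3(3^l+1) = 3^{l+1}+3$ pipes, two more than the $3^{l+1}+1$ available for $W_{l+1}$; your partition into three blocks of size $3^l$ plus one hub pipe $p^{*}$ cannot accommodate three full sub-constructions. Moreover, even if the pipes did fit, the chained product is exactly the paper's product lemma and yields only $|W_l|^3 = 3^{3(3^l-1)/2}$ configurations, whereas $|W_{l+1}| = 3^{(3^{l+1}-1)/2} = 3\cdot 3^{3(3^l-1)/2}$. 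The entire content of the conjecture is that the top-level $C_3$ contributes an extra factor of $3$ in collision-free configurations \emph{while simultaneously saving two pipes} relative to the naive product --- and that is precisely the part your sketch leaves as ``the bridging hoses fail to splice correctly,'' with no construction and no argument. Until the inter-block wiring is specified in a way that fits the pipe budget and the factor-of-$3$ gain is established, this remains a restatement of the difficulty rather than a proof.
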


One consequence of the 3-tree conjecture is: we can push the upper bound for $\GH(\EQ_n)$ even further. 

\begin{theorem} \label{3treeconseq}
 If conjecture \ref{3treeconj} holds, then the garden hose complexity of the equality function: $$\GH(\EQ_n) \leq \frac{2}{\log 3} \cdot n + O(1) \approx 1.262 n + O(1). $$
\end{theorem}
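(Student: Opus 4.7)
The plan is to apply formula~(\ref{ghgratio}) to the groups $W_l$ supplied by Conjecture~\ref{3treeconj}, compute $|W_l|$ exactly, and let $l\to\infty$.

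First I would determine the order of $W_l$. Since $W_l \cong W_{l-1}\wr C_3$, with the outer $C_3$ acting on three copies of the inner group's natural representation, the standard wreath-product order formula gives $|W_l| = |W_{l-1}|^3\cdot 3$, with base case $|W_1|=3$. Writing $|W_l|=3^{a_l}$, the recursion becomes $a_l = 3a_{l-1}+1$ with $a_1=1$, whose closed-form solution is $a_l=(3^l-1)/2$. So $|W_l|=3^{(3^l-1)/2}$, which is consistent with the value $|W_2|=3^4=81$ quoted in Section~5.1 and with $|W_3|=3^{13}$ underlying Theorem~\ref{mainthm} (the $l=3$ instance, where $m_3=28$).

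Under the conjecture, $W_l$ is a $(3^l+1,t)$-garden hose permutation group, so formula~(\ref{ghgratio}) immediately yields
\begin{equation*}
\GH(\EQ_n) \;\leq\; \frac{3^l+1}{\log 3^{(3^l-1)/2}}\, n \;+\; O_l(1) \;=\; \frac{2}{\log 3}\left(1+\frac{2}{3^l-1}\right)n \;+\; O_l(1).
\end{equation*}
The leading coefficient tends to $\frac{2}{\log 3}\approx 1.262$ as $l\to\infty$ and can be made arbitrarily close to it by choosing $l$ large enough, which produces the stated bound.

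The only subtle point is that the additive constant supplied by the corollary depends on the pipe count $m_l=3^l+1$ of the base construction, so strictly the conclusion should be read in a limiting sense: for every $\varepsilon>0$ there is $l=l(\varepsilon)$ for which $\GH(\EQ_n)\le\bigl(\tfrac{2}{\log 3}+\varepsilon\bigr)n+C_\varepsilon$. The real difficulty of Theorem~\ref{3treeconseq} therefore lies not in this short deduction but inside Conjecture~\ref{3treeconj} itself---establishing that the iterated wreath product $W_l$ really acts as a garden hose permutation group---which is exactly why the result is stated conditionally.
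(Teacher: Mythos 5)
Your proposal is correct and follows essentially the same route as the paper: compute $|W_l|=3^{(3^l-1)/2}$ (the paper sums the geometric series $1+3+\cdots+3^{l-1}$ directly rather than solving the recursion, but the result is identical), plug into formula~(\ref{ghgratio}), and let $l\to\infty$. Your closing remark that the bound should strictly be read as ``for every $\varepsilon>0$ there is an $l$ with $\GH(\EQ_n)\le(\tfrac{2}{\log 3}+\varepsilon)n+C_\varepsilon$'' is a point of rigor the paper glosses over, and it is a welcome addition.
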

\begin{proof}
 Since $$|W_l| = 3^{1+3+ \cdots + 3^{l-1}} = 3^{\frac{3^l-1}{2}},$$ and $m = 3^l + 1$, thus by formula (\ref{ghgratio}), we have 
$$\frac{m}{\log |W_l|} = \frac{3^l+1}{\log (3^{\frac{3^l-1}{2}})} = \frac{2}{\log 3} \cdot \frac{3^l+1}{3^l-1},$$ for large enough $l$,
the above tends to $\frac{2}{\log 3} \approx 1.262$.
\end{proof}

We already know that the 3-tree conjecture holds for $l = 2$, actually it also holds for $l = 3$ which gives us better upper bound for $\GH(\EQ_n)$ unconditionally. 

\begin{proof} [proof of theorem \ref{mainthm}]
It suffices to provide conjugate representatives for $W_3 = g^{-1}Kg \cong C_3 \wr C_3 \wr C_3$ where \\
$g= (3,17,4,15)(6,24,20,11)(2,28,18,5,26,22,14,2)(7,21,13,8,25,23,19,10,16)$
and $K$ is generated by: \\
$(1,2,3), (4,5,6), (1,4,7)(2,5,8)(3,6,9),$ \\
$(10,11,12), (13,14,15), (10,13,16)(11,14,17)(12,15,18),$ \\
$(19,20,21), (22,23,24), (19,22,25)(20,23,26)(21,24,27),$\\
$(1,10,19)(2,11,20)\cdots(9,18,27)$ \\
Then, by computer programs, one can check it is a $(28, 7)$ garden hose permutation group. Thus $\frac{28}{\log 3^{13}} \approx 1.359.$
\end{proof}

\section{Further research}

At least two questions arise. First try to prove the 3-tree conjecture. Second, is there any other garden hose permutation group besides $W_2$ and $W_3$ 
such that the structure can be generalized in order to get better upper bound for $\GH(\EQ_n)$? 

The authors would like to thank Mike Saks for useful discussions. 

\bibliographystyle{alpha}
\bibliography{gh}

\begin{appendix}
 
\section{Simulated Annealing Program}
\label{sec:saprogram}
This program searches for a large permutation submatrix of the configuration matrix $M_m$. 

We assume $m$ is even, Alice's configuration has only one open pipe, and Bob's hoses covers half of the pipes.

First we find a small permutation submatrix, than it grows larger. In each iteration, we try to add one row or one column. If it fails to grow at some point, we remove one row and/or one column from it, and try again. We give the pseudo-code as below.

\lstset{
language=Java,
basicstyle=\footnotesize,
frame=single,
morekeywords={with, prob, and, print}
}
\begin{lstlisting}
/*
 * X is a set of rows. Y is a set of columns. 
 * The submatrix X * Y is the permutation submatrix we want to find. 
 */ 
X = empty set
Y = empty set

add_a_row:
for x in (all rows \ X) in random order {
    if submatrix {x} * Y is all-zero {
        goto add_a_column
    }
}
/* Failed to add a row. Remove a row and a column. */
Pick a 1 in submatrix X * Y at random. 
Denote the location of the 1 by (x1,y1).
Remove x1 from X.
Remove y1 form Y.
goto add_a_row

add_a_column: 
for y in (all columns \ Y) in random order {
    if submatrix X * {y} is all-zero and (x,y) is one {
        /* We have a larger one. */
        X = X union {x}
        Y = Y union {y}
        print X * Y is a permutation submatrix
        goto add_a_row
    }
}
/* Failed to add a row. */
with prob. 1/4 {
    goto add_a_row /* Discard y. */
}
/* with prob. 3/4 */
/* Remove a row and a column. */
Pick a 1 in submatrix X * Y at random. 
Denote the location of the 1 by (x1,y1).
Remove x1 from X.
Remove y1 form Y.
goto add_a_column
\end{lstlisting}

We run the program on a PC. If it has not found a larger permutation submatrix (no ``print'') in a long time, we kill the program and run it again. The following table shows the search results on $m\leq 12$, where $k$ is the size of the maximum permutation submatrix the program find.

\begin{tabular}{c||c|c|c|c|c}
$m$ & 4 & 6 & 8 & 10 & 12 \\
\hline
$k$ & 6 & 15 & 48 & 144 & 395 \\
\hline 
running time & $<1$ sec & $<1$ sec & $10$ sec & $1$ hour & $1$ day \\
\hline
$m/\log k$ & 1.547... & 1.535... & 1.432... & 1.394... & 1.391... \\ 
\end{tabular}

\end{appendix}

\end{document}